\title{\LARGE \bf
Robust Optimal Control for Nonlinear Systems with Parametric Uncertainties via System Level Synthesis
}
\author{Antoine P. Leeman$^1$ \and Jérôme Sieber$^1$ \and Samir Bennani$^2$ \and Melanie N. Zeilinger$^1$
\thanks{This work has been supported by the European Space Agency under OSIP 4000133352, under Grant NPI 621-2018 and by the Swiss Space Center. Corresponding author: Antoine P. Leeman. Email: aleeman@ethz.ch}
\thanks{$^1$Antoine P. Leeman, Jérôme Sieber, and Melanie N. Zeilinger are with the IDSC, ETH Zurich, 8092 Zurich, Switzerland}
\thanks{$^2$Samir Bennani is with the ESTEC, European Space Agency, 2201 AZ Noordwijk, The Netherlands}}
\newcounter{noteCounter}
\definecolor{BgYellow}{HTML}{FFF59C}
\definecolor{FrameYellow}{HTML}{F7A600}
\definecolor{BgPink}{HTML}{EF6FA7}
\definecolor{FramePink}{HTML}{E5446E}
\definecolor{BgGreen}{HTML}{C7D92D}
\definecolor{FrameGreen}{HTML}{89B23B}
\definecolor{BgBlue}{HTML}{45BEE9}
\definecolor{FrameBlue}{HTML}{31A8C9}
\definecolor{BgWhite}{HTML}{D8D8D8}
\definecolor{FrameWhite}{HTML}{7F7F7F}
\definecolor{BgBrown}{HTML}{8E7A45}
\definecolor{FrameBrown}{HTML}{6B5B32}
\newtcolorbox{YStkyNote}[1][]{%
    enhanced,
    before skip=2mm,after skip=2mm, 
    width=0.45\textwidth, 
    boxrule=0.2mm,
    colback=BgYellow, colframe=FrameYellow, 
    attach boxed title to top left={xshift=0cm,yshift*=0mm-\tcboxedtitleheight},
    varwidth boxed title*=-3cm,
    boxed title style={frame code={%
        \path[left color=FrameYellow,right color=FrameYellow,
        middle color=FrameYellow]
        ([xshift=-0mm]frame.north west) -- ([xshift=0mm]frame.north east)
        [rounded corners=0mm]-- ([xshift=0mm,yshift=0mm]frame.north east)
        -- (frame.south east) -- (frame.south west)
        -- ([xshift=0mm,yshift=0mm]frame.north west)
        [sharp corners]-- cycle;
        },interior engine=empty,
    },
    sharp corners,rounded corners=southeast,arc is angular,arc=3mm,
    underlay={%
        \path[fill=BgYellow!80!black] ([yshift=3mm]interior.south east)--++(-0.4,-0.1)--++(0.1,-0.2);
        \path[draw=FrameYellow,shorten <=-0.05mm,shorten >=-0.05mm,color=FrameYellow] ([yshift=3mm]interior.south east)--++(-0.4,-0.1)--++(0.1,-0.2);
        },
    drop fuzzy shadow, 
    fonttitle=\bfseries, 
    title={#1}
}
\newcommand{\unnumberedfootnote}[1]{%
  \begingroup
  \renewcommand{\thefootnote}{}%
  \footnotetext{#1}%
  \addtocounter{footnote}{-1}%
  \endgroup
}
\newcommand{\note}[1]{%
\ifthenelse{\boolean{shownotes}}{%
\stepcounter{noteCounter}%
\hyperlink{note:\thenoteCounter}{\(^{\thenoteCounter}\)}%
\unnumberedfootnote{%
  \hypertarget{note:\thenoteCounter}{}%
  \begin{YStkyNote}
    \(^{\thenoteCounter}\)%
    #1%
  \end{YStkyNote}%
}%
}{}}
\newcommand{\dashed}[1]{%
  \ifthenelse{\boolean{shownotes}}{\st{#1}}{}%
}
\newcommand{\shortlong}[2]{%
  \ifthenelse{\boolean{shownotes}}{\blue{#2}}{#1}%
}
\DeclareAcronym{RMPC}{
  short = RMPC,
  long  = robust model predictive control,
}
\DeclareAcronym{SLS}{
  short = SLS,
  long  = system level synthesis,
}
\DeclareAcronym{MPC}{
  short = MPC,
  long  = model predictive control,
}
\DeclareAcronym{LTV}{
  short = LTV,
  long  = linear time-varying,
}
\DeclareAcronym{NLP}{
  short = NLP,
  long  = nonlinear program,
}
\newcommand{\xz}{\bar x}
\newcommand{\Z}{\mathbf{z}}
\newcommand{\V}{\mathbf{v}}
\newcommand{\U}{\mathbf{u}}
\renewcommand{\d}{{\tilde{\mathbf{d}}}}
\renewcommand{\P}{\mathbf \Phi}
\newcommand{\Px}{\P_\x}
\newcommand{\Pu}{\P_\u}
\newcommand{\Y}{\mathbf{y}}
\newcommand{\X}{\mathbf{x}}
\newcommand{\nx}{{n_\mathrm{x}}}
\renewcommand{\nu}{{n_\mathrm{u}}}
\newcommand{\nt}{{n_\mathrm{\theta}}}
\newcommand{\nw}{{n_\mathrm{w}}}
\newcommand{\nwt}{{n_{\tilde{\mathrm{w}}}}}
\newcommand{\ny}{{n_\mathrm{y}}}
\newcommand{\R}{\mathbb{R}}
\newcommand{\vertx}[1]{{#1}_\textrm{vert}}
\newcommand{\W}{\mathbf{w}}
\newcommand{\E}{\bm{\eta}}
\newcommand{\fl}{\bar f}
\newcommand{\fa}{f_\theta}
\newcommand{\sv}[2]{\left(#1,#2\right)}
\newcommand{\T}{^\top}
\newcommand{\Km}{\forall k\in \mathbb{N}_{T}}
\newcommand{\K}{\forall k\in \mathbb{N}_{T+1}}
\newcommand{\I}{\forall i\in \mathbb{N}_{n_c}}
\newcommand{\x}{\mathrm{x}}
\renewcommand{\u}{\mathrm{u}}
\newcommand{\e}{h}
\newcommand{\tube}{\bm \tau}
\newcommand{\B}{ \mathcal{B}_\infty}
\newcommand{\defmath}{\vcentcolon =}
\newcommand{\mathdef}{=\vcentcolon}
\newcommand{\blue}[1]{{\color{blue}{#1}}}
\newcommand{\onenorm}[1]{ \|{\e_i^\top} #1 \|_1 }
\newtheorem{remark}{Remark}[section]
\newtheorem{proposition}{Proposition}[section]
\newtheorem{theorem}{Theorem}[section]
\newtheorem{assumption}{Assumption}[section]
\newtheorem{corollary}{Corollary}[section]
\begin{document}

\maketitle
\thispagestyle{empty}
\pagestyle{empty}

\begin{abstract}
This paper addresses the problem of optimally controlling nonlinear systems with norm-bounded disturbances and parametric uncertainties while robustly satisfying constraints. The proposed approach jointly optimizes a nominal nonlinear trajectory and an error feedback, requiring minimal offline design effort and offering low conservatism.
This is achieved by decomposing the affine-in-the-parameter uncertain nonlinear system into a nominal \emph{nonlinear} system and an uncertain linear time-varying system.
Using this decomposition, we can apply established tools from system level synthesis to \emph{convexly} over-bound all uncertainties in the nonlinear optimization problem. Moreover, it enables tight joint optimization of the linearization error bounds, parametric uncertainties bounds, nonlinear trajectory, and error feedback.
With this novel controller parameterization, we can formulate a convex constraint to ensure robust performance guarantees for the nonlinear system.
The presented method is relevant for numerous applications related to trajectory optimization, e.g., in robotics and aerospace engineering. We demonstrate the performance of the approach and its low conservatism through the simulation example of a post-capture satellite stabilization.
\end{abstract}
\section{Introduction}
\subsection{Motivation}
Robust nonlinear optimal control addresses a ubiquitous challenge in various safety-critical applications, such as drones, spacecraft, and robotic systems~\cite{Ho2021OnlineUncertainty,Wu2022Robust-RRT:Systems, Singh2017,leeman2023autonomous}. These systems are complex and difficult to model accurately due to uncertainties from  measurement errors, unmodeled dynamics, or environmental disturbances. 
As a result, there is often a mismatch between the predictive model and the actual system.
It is therefore crucial to guarantee robust constraint satisfaction to ensure the safety of these systems.
However, achieving such guarantees often comes at the cost of conservatism, non-trivial system-specific design, or substantial computational effort (either during operation or during controller design), originating from trading-off performance, robustness, and flexibility, as evidenced by a wealth of prior research on the topic outlined below.

\subsection{Related Work}
Traditionally, robust optimal control has been divided into two main steps: (1) the optimization of a nominal trajectory (also called reference trajectory, guidance, or feed-forward)~\cite{MALYUTA2021282} and (2) the offline design of a stabilizing feedback (controller) to compensate for modeling errors or disturbances. The propagation of those uncertainties is commonly approximated using tubes, or funnels~\cite{majumdar2017funnel}, and the corresponding reachable sets around the nominal trajectory are used to ensure robust constraint satisfaction.
However, this separation may introduce significant conservatism and limit the controller's performance; compare also~\cite{villanueva2017robust} for a recent approach for nonlinear system.
In this work, we propose an approach that addresses the limitations of this classic design paradigm by considering joint optimization of the nominal trajectory and error feedback, with a focus on robust constraint satisfaction in nonlinear systems with disturbances.

Specifically, this work aims to tackle the challenges associated with a particular class of {uncertainties}: parametric uncertainties, which commonly arise from model mismatch.
Techniques for addressing such uncertainties have been developed in the context of \ac{RMPC}.
Here, a trajectory optimization problem is solved at each time step, incorporating a fixed feedback and tightened constraints.
The conservatism of the constraint tightening is well-studied for linear systems~\cite{Kohler2019LinearConservatism,fleming2014multiplicative,Bujarbaruah2021AUncertainty}; however, extending it to nonlinear systems with parametric uncertainties remains a significant challenge in general. Although several computationally efficient methods have been proposed, using non-trivial offline designs based on contraction metrics~\cite{Sasfi2022RobustMetrics, Kohler2021ASystems, kohler2021robust}, these approaches often result in system-dependent or conservative designs. 

With the aim of reducing conservatism, \ac{SLS}~\cite{Anderson2019} has been introduced to jointly and convexly optimize the error feedback and nominal trajectory. 
Because the error feedback is optimized online, the offline system-specific design is reduced, and \ac{SLS}-based \ac{RMPC} typically exhibits reduced conservatism~\cite{leeman2022predictive,sieber2021system}, especially with parametric uncertainties~\cite{Chen2022RobustSynthesis}.
While \ac{SLS} has been extended to nonlinear systems~\cite{Dimitar2020, conger2022nonlinear,furieri2022neural,chen2023safety}, including a formulation with robust constraint satisfaction~\cite{Leeman2023RobustSynthesis}, none of the previous approaches have considered parametric uncertainties. In this paper, we fill this gap by treating parametric uncertainties similarly as in linear \ac{SLS}~\cite{Chen2022RobustSynthesis}, all while optimizing the nonlinear trajectory.%
\subsection{Contribution}
We present a novel approach for jointly optimizing an error feedback and nominal trajectory for a nonlinear system with affine parametric uncertainties while ensuring robust constraint satisfaction.
Besides the controller and the emph{nonlinear} trajectory, a \emph{convex} overbound of the parametric uncertainties and linearization errors is also jointly optimized, leveraging \ac{SLS}. This presents an advantage over~\cite{Leeman2023RobustSynthesis}, where the over-bound is neither convex nor accounts for {affine} parametric uncertainties.
The convex over-bounding enables features such as convex constraints to guarantee robust performance.
This new controller parameterization significantly reduces the need for offline, system-specific design, typically required in conventional methods that rely on offline-designed stabilizing controllers. Consequently, our approach is less conservative.
The benefits are demonstrated through the simulation of the post-capture satellite stabilization with uncertainties.
By combining low conservatism and flexibility, our proposed method holds significant potential for new developments in nonlinear \ac{RMPC} and motion planning.
\subsection{Notation}
We define the set $\mathbb{N}_T \defmath \{0,\dots,T-1\}$ where $T$ is a natural number. We denote stacked vectors or matrices by $\sv{a}{b} = [a\T,~b\T]\T$. For a vector $r\in \R^n$, we denote its $i^\text{th}$ component by $r_i$. Let $\R$ be the set of real numbers, and $0_{p,q}\in \R^{p,q}$ be a matrix of zeros. Let $\mathcal{L}^{T,p\times q}$ denote the set of all block lower-triangular matrices with the following structure
\begin{equation}
     {M} = \begin{bmatrix} M^{0,0} & 0_{p,q} & \dots & 0_{p,q} \\ M^{1,1} & M^{1,0} & \dots & 0_{p,q} \\ \vdots & \vdots & \ddots & \vdots \\M^{T-1,T-1} & M^{T-1,T-2} & \dots & M^{T-1,0} \end{bmatrix},
     \label{eq:mat}
\end{equation}
where $M^{i,j}\in \R^{p\times q}$. We denote the $k^\text{th}$ block row of $M$ as {$M^{(k)}\defmath [M^{k,k}, \ldots,~M^{k,0},0_{p,q(T-1-k)}]$.}
The block diagonal matrix consisting of matrices $A_1,\dots,A_T$ is defined by $\mathrm{blkdiag}(A_1,\dots,A_T)$.
The matrix ${I}$ denotes the identity with its dimensions either inferred from the context or indicated by the subscript, i.e., ${I}_{\nx}\in\R^{\nx\times \nx}$.
Let $\B^{m}$ be the unit ball defined by $\B^{m} \defmath \{d\in\R^m|~ \|d\|_\infty \le 1\}$.
We define $\mathcal{W}^k\defmath \underbrace{\mathcal{W} \times \dots \times \mathcal{W}}_{\text{$k$ times}}$.
For a sequence of vectors $\W_k\in\mathcal{W}_k\subseteq \R^m$ and $k\in\mathbb{N}$, we define $\W_{0:k}:=(\W_0,\dots,\W_k)\in \mathcal{W}^{0:k}\defmath\mathcal{W}_0 \times \dots \times \mathcal{W}_{k}$. For a set $\Theta$, we write the subset consisting of its vertices as $\Theta_\textrm{vert} \subset \Theta$.

\section{Problem Formulation}

Consider the robust nonlinear optimal control problem characterized by 
\begin{subequations}
\label{eq:prob_form}
\begin{align}
\min_{\pi(.)} \quad & J_T(\xz, \pi(.)),\label{eq:cost_horizon}\\
    \text{s.t.}\quad  & \Km : \nonumber \\
    &\X_{k+1}= f(\X_{k},\U_{k}, \theta) + \W_k,~ \X_0 = \bar x,\label{eq:nonlinear_dyn}\\
    & \U_k = \pi_k(\X_{0:k}),\label{eq:robust_controller}\\
    &{(\X_k,\U_k) \in \mathcal{C},~\forall \W_k \in \mathcal{W}~\forall \theta \in \Theta.}\label{eq:prob_form_cons}
\end{align}
\end{subequations}
The dynamics are given by \eqref{eq:nonlinear_dyn}, with state $\X_k \in \R^\nx$, input $\U_k \in \R^\nu$, parameter $\theta\in\R^\nt$, and norm-bounded disturbance 
\begin{equation}
    \W_k\in \mathcal{W}\defmath E \B^\nw,\label{eq:disturbance_E}
\end{equation}
with $E\in \R^{\nx \times \nw}$. 
The results of this paper can be extended to more general polytopic sets $\mathcal{W}\defmath \{ w \in \R^\nw|~ H_w w \le h_w\}$ although this extension was omitted for notational clarity.
The fixed but unknown parameter $\theta$ belongs to a known compact parameter set $\Theta \subset \R^{n_\theta}$.
The initial condition is given by $\bar x \in \R^\nx$. 
The feedback~\eqref{eq:robust_controller} is designed to minimize the objective~\eqref{eq:cost_horizon}, while, robustly satisfying the constraints~\eqref{eq:prob_form_cons}, i.e., for any realization of the disturbance $\W_k$ and any value of the parameter $\theta\in\Theta$.
\begin{assumption}
The dynamics~\eqref{eq:nonlinear_dyn} are affine in~$\theta$, i.e.,
\begin{equation}
\begin{aligned}
   \X_{k+1} & = \fl(\X_k, \U_k) +  f_\theta(\X_k, \U_k) \theta+ \W_k,\\
    & = \fl(\X_k, \U_k) +  \sum_{i=1}^\nt f_{\theta,i}(\X_k, \U_k) \theta_i+ \W_k,
\end{aligned}
\label{eq:dynamics}
\end{equation}
with a nominal model $\fl$ and a parametric model $f_\theta$.
\label{assum:0}
\end{assumption}
To ensure computational tractability, we select a causal, affine, time-varying error feedback
\begin{equation}
\label{eq:affine_feedback}
    \pi_k(\X_{0:k}) \defmath \V_k + \sum_{j=0}^{k-1} K^{k-1,j} \Delta \X_{k-j},
\end{equation}
based on a nominal trajectory
\begin{equation}
    \Z_{k+1} = \fl(\Z_k, \V_k),~\Z_0 = \bar x.
    \label{eq:ref_dynamics}
\end{equation} 
The error between the state governed by~\eqref{eq:dynamics} and the nominal trajectory~\eqref{eq:ref_dynamics} is denoted by $\Delta \X_k  \defmath  \X_k  - \Z_k $, with a similar definition for the input error $\Delta \U_k  \defmath \U_k - \V_k$. 
\begin{assumption}
The nonlinear functions $\fl: \R^{\nx} \times \R^{\nu} \mapsto \R^{\nx}$ and $f_\theta:  \R^{\nx} \times \R^{\nu} \mapsto \R^\nx \times \R^\nt$ are three times continuously differentiable.
\label{assum:1}
\end{assumption}
\begin{assumption}
The constraint set $\mathcal{C}$ is a compact polytopic set defined as
\begin{equation}
\begin{aligned}
    \mathcal{C}& \defmath  \{(x,u)|~ c_i\T(x,u)+b_i\leq 0, \forall i\in \mathbb{N}_{n_c}\},\label{eq:constraints}\\
    \end{aligned}
\end{equation}where $c_i \in \R^{\nx+\nu}$, and $b_i \in \R$.
\label{assum:2}
\end{assumption}
\begin{assumption}
The set of parameters $\Theta$ is a compact polytopic set $ \Theta \defmath \{ \theta \in \R^\nt|~ H_\theta \theta \le h_\theta \}$.
\label{assum:3}
\end{assumption}

\section{Robust nonlinear optimal control via \ac{SLS}}
In this section, we present the main contribution of the paper, which is a tractable formulation of~\eqref{eq:prob_form} for optimizing jointly an affine policy~\eqref{eq:affine_feedback} and a nonlinear trajectory that ensure robust constraint satisfaction for the uncertain nonlinear system~\eqref{eq:dynamics}.
The proposed approach relies on linearization, discussed in Section~\ref{sec:linearization}.
It decomposes the uncertain nonlinear system~\eqref{eq:dynamics} into a nominal nonlinear system and an uncertain parametric \ac{LTV} system, including a \emph{convex} over-bound of the linearization errors and parametric uncertainties.
Specifically, leveraging established \ac{SLS} tools for parametric \ac{LTV} systems~\cite{Chen2022RobustSynthesis}, we jointly optimize the linearization error bounds, parametric uncertainties bounds, nonlinear trajectory and error feedback.
This approach guarantees robust constraint satisfaction as outlined in Section~\ref{sec:sls}.
It leads to a joint optimization of the nominal nonlinear trajectory~\eqref{eq:ref_dynamics} and the affine time-varying error feedback~\eqref{eq:affine_feedback}. The resulting nonlinear optimization problem is detailed in Section~\ref{sec:rnoc_sls}.
Compared to related work~\cite{Leeman2023RobustSynthesis}, the proposed approach considers parametric uncertainties and achieves a convex over-bound of the uncertainties, reducing conservatism in handling uncertainties.
\subsection{Over-approximation of the nonlinear reachable set}
\label{sec:linearization}
The following outlines the re-formulation of the uncertain nonlinear system into an equivalent form as an uncertain \ac{LTV} system.
The linearization of the dynamics~\eqref{eq:nonlinear_dyn} around a nominal state and input $(z,v)$ is characterized by the Jacobian matrices
\begin{equation}
\begin{aligned}
&A(z, v, \theta) \defmath  \left.\frac{\partial  f}{\partial x}   \right|_{\substack{x=z\\u=v}},~B(z, v, \theta) \defmath \left.\frac{\partial  f}{\partial u}   \right|_{\substack{x=z\\u=v}},
\label{eq:jac}
\end{aligned}
\end{equation}
and the symmetric Hessian $H_i$ of the $i^\text{th}$ component of $f$
\begin{equation}
\label{eq:hess}
 H_i(\xi,\theta) = \left.\begin{bmatrix}
     \frac{\partial^2  f_i}{\partial x^2}  & \frac{\partial^2  f_i}{\partial x\partial u}\\
     * & \frac{\partial^2  f_i}{\partial u^2} 
 \end{bmatrix}\right|_{(x,u,\theta) = (\xi,\theta)},
\end{equation}
evaluated at some {(unknown)} point $\xi\in \R^{\nx + \nu}$, between $(z,v)$ and $(x,u)$, {i.e., $\xi\in\{(1-\alpha) (z,v) + \alpha (x,u)~|~0\le \alpha \le 1 \}$.}
We apply the Taylor series expansion to~\eqref{eq:nonlinear_dyn} and use the definitions~\eqref{eq:jac} to obtain
\begin{align}
    f(x ,u, \theta ) + w \stackrel{ \eqref{eq:dynamics}}{=}
    & \bar f(z , v) + f_\theta (z,v)\theta  \nonumber\\
    &+  A(z ,v, \theta  )( x  - z ) +B(z ,v , \theta )( u  - v )\nonumber\\
   & + {r(x ,u ,z ,v, \theta)}+ w,\label{eq:lin_hess}
\end{align}
where $ r(x ,u ,z ,v, \theta) \defmath (e\T H_1 e,\dots , e\T H_\nx e)$ is the Lagrange remainder accounting exactly for the approximation resulting from the linearization, and ${e} \defmath {(x-z, u-v)} =  (\Delta x, \Delta u) \in \R^{\nx + \nu}$ is the error.
This Lagrange remainder is overbounded, similarly as in~\cite{althoff2008reachability}, using the (worst-case) curvature $\mu \in \R^{\nx \times \nx}$ defined as
\begin{equation}\label{eq:def_mu}
    \mu \defmath \text{diag}(\mu_1,\dots, \mu_\nx), ~\mu_i\defmath \frac{1}{2} \max_{\substack{\xi \in \mathcal{C},  \theta \in \Theta,\\ \epsilon\in \B^{\nx + \nu}}} |\epsilon^\top H_i(\xi,\theta)\epsilon|,
\end{equation}
over-bounded within the constraint set $\mathcal{C}$.
The following Proposition establishes that, by using the maximization over all possible values of $\xi$ and $\theta$, the constants $\mu$ capture the worst-case linearization error.
\begin{proposition}
\label{prop:lin_err}
Given Assumptions \ref{assum:1} and \ref{assum:2}, the remainder in \eqref{eq:lin_hess} satisfies
\begin{equation}
    |r_i(x,u,z,v,\theta)|\le \|e\|_\infty^2 \mu_i,
\end{equation}
for any $(x,u) \in \mathcal{C}, (z,v) \in \mathcal{C}, \theta \in \Theta$.
\end{proposition}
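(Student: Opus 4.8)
The plan is to reduce the claimed inequality to a direct application of the definition of $\mu_i$ in \eqref{eq:def_mu}, once the scalar factor $\|e\|_\infty^2$ has been isolated from the quadratic form. By Taylor's theorem with Lagrange remainder---the very identity used to obtain \eqref{eq:lin_hess}---the $i^\text{th}$ remainder term equals $r_i(x,u,z,v,\theta) = \tfrac12\, e\T H_i(\xi,\theta)\, e$ for some point $\xi$ lying on the segment joining $(z,v)$ and $(x,u)$, where the factor $\tfrac12$ is consistent with the $\tfrac12$ appearing in the definition of $\mu_i$. Assumption~\ref{assum:1} ensures that $f$ is of class $C^2$, so each $H_i$ is continuous and, together with the compactness of $\mathcal{C}\times\Theta\times\B^{\nx+\nu}$, the maximization in \eqref{eq:def_mu} is well posed and attained.

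First I would dispose of the trivial case $e=0$, for which both sides of the inequality vanish. For $e\neq 0$, I would normalize by setting $\hat e \defmath e/\|e\|_\infty$, so that $\|\hat e\|_\infty = 1$ and hence $\hat e \in \B^{\nx+\nu}$. Substituting $e = \|e\|_\infty\,\hat e$ into the quadratic form and exploiting its homogeneity gives $|r_i| = \tfrac12\,\|e\|_\infty^2\,|\hat e\T H_i(\xi,\theta)\,\hat e|$, which cleanly separates the magnitude of the error from its direction and the evaluation point.

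The key step is to argue that $\xi\in\mathcal{C}$. Since $(x,u)\in\mathcal{C}$ and $(z,v)\in\mathcal{C}$, and by Assumption~\ref{assum:2} the set $\mathcal{C}$ is a (convex) polytope, every point of the segment joining these two points---in particular the Lagrange point $\xi$---also lies in $\mathcal{C}$. With $\xi\in\mathcal{C}$, $\theta\in\Theta$, and $\hat e\in\B^{\nx+\nu}$, the triple $(\xi,\theta,\hat e)$ is feasible for the maximization defining $\mu_i$, so that $|\hat e\T H_i(\xi,\theta)\,\hat e|\le 2\mu_i$. Combining this with the previous display yields $|r_i|\le \tfrac12\,\|e\|_\infty^2\,(2\mu_i) = \|e\|_\infty^2\,\mu_i$, which is the assertion.

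The only genuinely non-routine point is the containment $\xi\in\mathcal{C}$: it is precisely what allows the worst-case constant $\mu_i$, computed over the fixed compact set $\mathcal{C}\times\Theta\times\B^{\nx+\nu}$, to dominate the remainder evaluated along the actual, and unknown, linearization segment. Everything else---the reduction to the case $e\neq 0$, the homogeneity of the quadratic form in $\|e\|_\infty$, and the substitution into \eqref{eq:def_mu}---is routine.
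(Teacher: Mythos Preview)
Your argument is correct and is precisely the standard route: normalize $e$ to isolate $\|e\|_\infty^2$, use convexity of the polytopic set $\mathcal{C}$ to place the Lagrange point $\xi$ inside $\mathcal{C}$, and then invoke the definition of $\mu_i$. The paper itself does not spell out a proof but defers to \cite[Prop.~III.1]{Leeman2023RobustSynthesis}, whose argument is exactly the one you give here; in particular, your identification of the convexity of $\mathcal{C}$ as the only substantive step matches that reference.
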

\begin{proof}
{See~\cite{Leeman2023RobustSynthesis}.}
\end{proof}
The evaluation of the constants $\mu_i$ represents our proposed approach's only offline design requirement, distinguishing it from other prominent strategies, such as those presented in~\cite{Kohler2021ASystems,Sasfi2022RobustMetrics}, which necessitate the design of a (nonlinear) incrementally stabilizing controller. The $\mu_i$ reflects the level of nonlinearity present in the system and is used in over-approximating the disturbance set. 

We can compute an outer approximation of the reachable set of the nonlinear system~\eqref{eq:dynamics}, using the \ac{LTV} error system between the system dynamics~\eqref{eq:ref_dynamics} and~\eqref{eq:lin_hess}, i.e.,
\begin{equation}
\begin{aligned}
\Delta \X_{k+1} &= \X_{k+1} - \Z_{k+1} \\
&\stackrel{\eqref{eq:ref_dynamics},\eqref{eq:lin_hess}}{=} A(\Z_k, \V_k, \theta) \Delta\X_k + B(\Z_k, \V_k, \theta) \Delta\U_k\\ 
&~+ \underbrace{\fa(\Z_k, \V_k)\theta  + r(\X_k ,\U_k ,\Z_k ,\V_k, \theta) + \W_k}_{\mathdef \mathbf{d}_k}\\
\end{aligned}
\label{eq:LTV_error_nl}
\end{equation}
with 
\begin{equation}
\label{eq:def_ABtheta}
    \begin{aligned}
        A(\Z_k, \V_k, \theta ) &\defmath \bar A(\Z_k, \V_k) + A_\theta(\Z_k, \V_k)  \theta, \\
        B(\Z_k, \V_k, \theta ) &\defmath \bar B(\Z_k, \V_k) + B_\theta (\Z_k, \V_k)  \theta,
    \end{aligned}
\end{equation}
where $\bar A$ and $A_\theta$ are the Jacobians of $\fl$ and $f_\theta$, respectively, computed similarly as in~\eqref{eq:jac}.
Due to Proposition~\ref{prop:lin_err}, we have $r(x,u,z,v,\theta) \in \|e\|_\infty^2\mu \B^\nx$, and the lumped disturbance $\mathbf{d}_k$ in~\eqref{eq:LTV_error_nl} satisfies
\begin{equation}
\begin{aligned}
    \mathbf{d}_k&\in  \fa(\Z_k, \V_k)\Theta  \oplus \|\mathbf{e}_k\|_\infty^2 \mu \B^\nx \oplus \mathcal{W} \\
   &=  \fa(\Z_k, \V_k)\Theta  \oplus [E, \|\mathbf{e}_k\|_\infty^2 \mu] \B^{\nw + \nx}.
\end{aligned}
\label{eq:d_set}
\end{equation}
This set is expressed as a combination of three sets: ${ \fa(\Z_k, \V_k)\Theta }$, which is associated to the parametric uncertainty, $\|\mathbf{e}_k\|_\infty^2 \mu \B^\nx$, which reflects the linearization error, and $\mathcal{W}$, which captures the additive disturbance $\mathbf{w}_k$.
In Section~\ref{sec:rnoc_sls}, we will discuss a {jointly optimized} over-bounding of the uncertainties in~\eqref{eq:d_set}.

To optimize an affine error feedback~\eqref{eq:affine_feedback} while ensuring robust constraint satisfaction of the uncertain nonlinear system~\eqref{eq:nonlinear_dyn} based on the \ac{LTV} error dynamics~\eqref{eq:LTV_error_nl}, we study the robust optimal control problem for the special case of uncertain parametric \ac{LTV} systems in the following.

\subsection{Robust optimal control for parametric \ac{LTV} systems via \ac{SLS}}
\label{sec:sls}
We explore the system level parameterization of affine error feedback for uncertain \ac{LTV} systems, inspired by the technique proposed in~\cite{Chen2022RobustSynthesis}.
To that end, we consider a nominal \ac{LTV} system of the form
\begin{equation}
\tilde\Z_{k+1} = \tilde A_k \tilde\Z_k +\tilde B_k \tilde\V_k,~\tilde\Z_0 = \xz,
    \label{eq:nom_dyn}
\end{equation}
and an error \ac{LTV} system of a similar form as~\eqref{eq:LTV_error_nl},\begin{equation}
\begin{aligned}
\label{eq:uncertain_LTV}
   \Delta \tilde \X_{k+1} &= \tilde A_k \Delta \tilde \X_k + \tilde B_k \Delta\tilde  \U_k+ \underbrace{ \Delta \tilde A_k \Delta \tilde \X_k + \Delta \tilde B_k \Delta \tilde \U_k +  \d_k}_{\mathdef \E_k},
\end{aligned}
\end{equation}
where $\d_k\in  \tilde{\mathcal{D}}_k \defmath P_{\theta, k} \Theta \oplus P_{k}\B^\nx,$ with $P_{\theta, k}\in \R^{\nx\times \nt}$ and $P_k\in \R^{\nx \times \nwt}$, which is similar to the set~\eqref{eq:d_set}, and 
\begin{equation}
    \Delta \tilde A_k \in A_{\theta,k} \Theta,~ \Delta \tilde B_k \in B_{\theta,k} \Theta,\label{eq:def_Delta}
\end{equation}analogous to~\eqref{eq:def_ABtheta}.
The notation $\tilde{\cdot}$ differentiates~\eqref{eq:uncertain_LTV} from the approximated nonlinear system~\eqref{eq:LTV_error_nl}.
This parameterization will be used with the \ac{LTV} system~\eqref{eq:LTV_error_nl} and hence for parameterizing affine error feedback for the uncertain nonlinear system~\eqref{eq:nonlinear_dyn}. 

The system dynamics~\eqref{eq:uncertain_LTV} can be expressed in a compact form by stacking the quantities over the horizon, i.e.,
\begin{equation}
\label{eq:error_uncertain_LTV}
    \Delta \tilde \X = \mathbf{Z}\tilde{\mathbf{A}}\Delta \tilde \X + \mathbf{Z}\tilde{\mathbf{B}}\Delta \tilde \U + \underbrace{\mathbf{Z}(\tilde \Delta_A \Delta \tilde \X +\tilde \Delta_{B}\Delta \tilde \U)+ \d}_{=:\E},
\end{equation}
with {$\mathbf{Z}$, a block-downshift operator commonly used in \ac{SLS} (see, e.g., \cite{Leeman2023RobustSynthesis}), i.e.,
a matrix with identity matrices along its first block sub-diagonal and zeros elsewhere}, $\Delta \tilde \X \defmath \Delta \tilde \X_{1:T},~\Delta \tilde \U \defmath \Delta \tilde \U_{1:T},~\d \defmath \d_{0:T-1},$ and $\tilde{\mathbf{A}} \defmath \text{blkdiag}(\tilde A_1, \ldots, \tilde A_{T-1}, 0),~\tilde \Delta_{A} \defmath \text{blkdiag}( \Delta \tilde  A_1, \ldots,  \Delta \tilde  A_{T-1}, 0),$ and analogously for $\tilde{\mathbf{B}}$ and $\tilde \Delta_{B}$.

\begin{remark}Unlike previous works such as~\cite{Anderson2019,sieber2021system}, we employ the same notation as in~\cite{Leeman2023RobustSynthesis} which distinguishes $\Delta \tilde \X_0$ from $\Delta\tilde \X$, leading to a simpler parameterization.\end{remark}

As in~\cite{Leeman2023RobustSynthesis}, we introduce a causal error feedback
\begin{equation}
     \Delta \tilde \U = \mathbf{K}  \Delta \tilde \X,~\tilde \U_0 = \tilde \V_0,~\mathbf{K}\in \mathcal{L}^{T,\nu \times \nx}.
    \label{eq:affine_contr}
\end{equation}
This feedback allows us to derive the closed-loop error dynamics of~\eqref{eq:error_uncertain_LTV} as
 \begin{equation}
   \Delta \tilde \X = \mathbf{Z}( \tilde{\mathbf{{A}}} +  \tilde {\mathbf{B}}   {\mathbf{K}})  \Delta  \tilde \X+ \E,~  \tilde \U = \tilde \V+ {\mathbf{K}} \Delta  \tilde \X,~   \Delta\tilde  \X_0=0.
\label{eq:closed-loop_uncertain_LTV}
\end{equation}
Ensuring robust constraint satisfaction requires tight bounds on the lumped disturbance $\E$. However, obtaining such bounds can be challenging due to the dependence of $\E$ on the realized state $\Delta \tilde \X$ and input $\Delta \tilde \U$. To address this issue, we employ an over-approximation technique that {utilizes an additional decision variable}: a filtered disturbance set~\cite{Chen2022RobustSynthesis}. Therefore, we focus on the related \ac{LTV} system
\begin{equation}
   \Delta \tilde \X = \mathbf{Z}( \tilde{\mathbf{{A}}} +  \tilde {\mathbf{B}}   {\mathbf{K}})  \Delta  \tilde \X+ \mathbf \Sigma \tilde \W,~  \tilde \U = \tilde \V+ {\mathbf{K}} \Delta  \tilde \X,~   \Delta \tilde \X_0=0,
    \label{eq:filter_LTV}
\end{equation}
with the unit noise $\tilde \W_k \in \B^\nx$, and the filter
\begin{equation}
    \mathbf\Sigma  \defmath \text{blkdiag}(\mathbf\Sigma_0, \dots, \mathbf\Sigma_T),    \label{eq:Sigma}
\end{equation}
where $\mathbf{\Sigma}_k\defmath \text{diag}(\sigma_{k,1}, \dots, \sigma_{k,\nx})\in\R^{\nx\times\nx}$ are diagonal positive definite matrices.
The filter $\mathbf \Sigma$ is constrained to ensure that the reachable set of~\eqref{eq:filter_LTV} contains that of~\eqref{eq:closed-loop_uncertain_LTV}, both using the same controller. To guarantee this reachable set inclusion, the components of the disturbance $\mathbf{\Sigma}_k \tilde \W_k$ must have an amplitude at least as large as the lumped uncertainty $\E_k$ as defined in~\eqref{eq:uncertain_LTV}. The filter design is detailed in Proposition~\ref{prop:bound_dist}.

The closed-loop system~\eqref{eq:filter_LTV} is compactly written as
\begin{equation}
\begin{bmatrix} \Delta\tilde  \X\\ \Delta\tilde  \U \end{bmatrix} = \begin{bmatrix} (I - \mathbf{Z} \tilde{\mathbf{A}} - \mathbf{Z} \tilde{\mathbf{B}}  {\mathbf{K}} )^{-1} \\  {\mathbf{K}}(I - \mathbf{Z} \tilde{\mathbf{A}} - \mathbf{Z} \tilde{\mathbf{B}} {\mathbf{K}})^{-1} \end{bmatrix} \mathbf{\Sigma} \tilde{\W}= \vcentcolon \begin{bmatrix}  \Px \\  \Pu \end{bmatrix}  \tilde{\W},\\
\label{eq:sys_rep}
\end{equation}
with $\Px\in\mathcal{L}^{T,\nx \times \nx}, ~\Pu\in\mathcal{L}^{T,\nu \times \nx}$.
The following proposition presents a parameterization of the affine error feedback~\eqref{eq:affine_contr} as used in the \ac{LTV} system~\eqref{eq:filter_LTV}.
\begin{proposition}[Adapted from~\cite{Chen2022RobustSynthesis}]
\label{prop:slp}
The following two statements are equivalent
\begin{enumerate}    \item 
Let $\tilde \W\in { \B^{T\nx}}$ be an arbitrary disturbance sequence.
Any error state and input $\Delta\tilde \X$, $ \Delta \tilde\U$ satisfying~\eqref{eq:filter_LTV}, can also be represented as in~\eqref{eq:sys_rep} with some $\Px \in \mathcal{L}^{T,\nx \times \nx}$, $\Pu \in \mathcal{L}^{T,\nu \times \nx}$ and diagonal $\mathbf\Sigma$ as in~\eqref{eq:Sigma}, lying in the affine subspace
\begin{equation}
[I-\mathbf{Z} \tilde{\mathbf{A}}, ~- \mathbf{Z} \tilde{\mathbf{B}}]
\begin{bmatrix}
    \Px\\
    \Pu
\end{bmatrix}=\mathbf\Sigma.
\label{eq:slp}
\end{equation}
\item 
Let $\Px \in \mathcal{L}^{T,\nx \times \nx}$, $\Pu \in \mathcal{L}^{T,\nu \times \nx}$ and diagonal $\mathbf\Sigma$ be arbitrary matrices that satisfy~\eqref{eq:slp}. Then, the corresponding error state and input $\Delta \tilde \X$ and $\Delta \tilde \U$ computed using~\eqref{eq:sys_rep} satisfy the closed-loop dynamics in~\eqref{eq:filter_LTV} with the feedback gains ${\mathbf{K}} = \Pu  \Px^{-1}\in \mathcal{L}^{T,\nu\times \nx}$.
\end{enumerate}
\end{proposition}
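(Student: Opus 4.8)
The plan is to establish the equivalence by proving both implications through direct manipulation of the stacked closed-loop equations, relying throughout on the strictly block lower-triangular (hence nilpotent) structure induced by the block-downshift operator $\mathbf{Z}$; the argument mirrors the standard \ac{SLS} parameterization result adapted to the filtered dynamics~\eqref{eq:filter_LTV}. I would first record two structural facts used repeatedly: (i) since $\mathbf{Z}\tilde{\mathbf{A}}$ and $\mathbf{Z}\tilde{\mathbf{B}}\mathbf{K}$ are strictly block lower-triangular, the matrix $I-\mathbf{Z}\tilde{\mathbf{A}}-\mathbf{Z}\tilde{\mathbf{B}}\mathbf{K}$ is invertible with a block lower-triangular inverse carrying identity blocks on its diagonal, so $\P_\x$ and $\P_\u$ in~\eqref{eq:sys_rep} are well-defined and lie in $\mathcal{L}^{T,\nx \times \nx}$ and $\mathcal{L}^{T,\nu \times \nx}$; and (ii) products and inverses (of matrices with invertible diagonal blocks) preserve membership in $\mathcal{L}^{T,\cdot\times\cdot}$.

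For the first statement, I would start from~\eqref{eq:filter_LTV} with a fixed causal gain $\mathbf{K}\in\mathcal{L}^{T,\nu \times \nx}$ and collect the $\Delta\tilde{\X}$ terms to obtain $(I-\mathbf{Z}\tilde{\mathbf{A}}-\mathbf{Z}\tilde{\mathbf{B}}\mathbf{K})\Delta\tilde{\X}=\mathbf{\Sigma}\tilde{\W}$. Inverting by fact (i) gives $\Delta\tilde{\X}=\P_\x\tilde{\W}$ and $\Delta\tilde{\U}=\mathbf{K}\Delta\tilde{\X}=\P_\u\tilde{\W}$ with $\P_\x,\P_\u$ exactly as in~\eqref{eq:sys_rep} and $\P_\u=\mathbf{K}\P_\x$. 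Substituting these into the left-hand side of~\eqref{eq:slp} and factoring yields $(I-\mathbf{Z}\tilde{\mathbf{A}}-\mathbf{Z}\tilde{\mathbf{B}}\mathbf{K})\P_\x=\mathbf{\Sigma}$, which holds by the definition of $\P_\x$; this verifies~\eqref{eq:slp}, and the lower-triangular membership follows from fact (ii).

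For the second statement, I would take $\P_\x,\P_\u,\mathbf{\Sigma}$ satisfying~\eqref{eq:slp} and first argue that $\P_\x$ is invertible. Reading off the block-diagonal of~\eqref{eq:slp} and using that $\mathbf{Z}\tilde{\mathbf{A}}$ and $\mathbf{Z}\tilde{\mathbf{B}}$ contribute nothing on the block diagonal, the diagonal blocks of $\P_\x$ coincide with the blocks $\mathbf{\Sigma}_k$, which are positive definite by~\eqref{eq:Sigma}; hence the block lower-triangular $\P_\x$ has invertible diagonal and is invertible with $\P_\x^{-1}\in\mathcal{L}^{T,\nx \times \nx}$. Then $\mathbf{K}\defmath\P_\u\P_\x^{-1}$ is well-defined and lies in $\mathcal{L}^{T,\nu \times \nx}$ by fact (ii). Setting $\Delta\tilde{\X}=\P_\x\tilde{\W}$ and $\Delta\tilde{\U}=\P_\u\tilde{\W}=\mathbf{K}\Delta\tilde{\X}$ and right-multiplying~\eqref{eq:slp} by $\tilde{\W}$ recovers $(I-\mathbf{Z}\tilde{\mathbf{A}})\Delta\tilde{\X}-\mathbf{Z}\tilde{\mathbf{B}}\Delta\tilde{\U}=\mathbf{\Sigma}\tilde{\W}$, which rearranges precisely to the closed-loop dynamics~\eqref{eq:filter_LTV}; the initial condition $\Delta\tilde{\X}_0=0$ holds by the stacking convention $\Delta\tilde{\X}\defmath\Delta\tilde{\X}_{1:T}$.

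I expect the main obstacle to be the invertibility of $\P_\x$ in the second direction: it is the one step that does not reduce to a purely algebraic rearrangement, since it requires isolating the block-diagonal of the affine constraint~\eqref{eq:slp} and invoking positive definiteness of $\mathbf{\Sigma}$ to conclude invertibility of an otherwise merely lower-triangular matrix. Everything else is bookkeeping with the nilpotent shift structure of $\mathbf{Z}$ and the closure of $\mathcal{L}^{T,\cdot\times\cdot}$ under products and triangular inverses.
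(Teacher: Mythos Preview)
Your argument is correct and is precisely the standard \ac{SLS} parameterization proof specialized to the filtered dynamics: the forward direction follows by collecting terms and inverting the unit block lower-triangular operator, and the converse follows by reading the block-diagonal of~\eqref{eq:slp} to obtain invertibility of $\P_\x$ from the positive definiteness of the $\mathbf{\Sigma}_k$, then right-multiplying by $\tilde{\W}$. The paper itself does not provide a self-contained proof but simply defers to~\cite{Chen2022RobustSynthesis}, so there is nothing substantive to compare against; your write-up supplies exactly the argument one would expect to find behind that citation.
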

\begin{proof}
    See~\cite{Chen2022RobustSynthesis}.
\end{proof}

For any nominal trajectories $\tilde \Z$, $\tilde \V$ satisfying \eqref{eq:nom_dyn}, and any error feedback $\Delta \tilde \U = \Pu  \Px^{-1} \Delta \tilde \X, ~\tilde \U_0 = \tilde \V_0$, with $\Px,\Pu$ satisfying~\eqref{eq:slp}, the closed-loop error on the states and inputs for the \ac{LTV} system~\eqref{eq:filter_LTV} is given by
\begin{equation}
     \sv{ \Delta \tilde \X_k}{ \Delta \tilde \U_k}= \P^{(k-1)} \tilde\W~\K,
    \label{eq:decomp}
\end{equation}
where $ \P^{(k)} \defmath  \sv{\Px^{(k)}}{\Pu^{(k)}}$.
Hence, the disturbance reachable sets of the system~\eqref{eq:filter_LTV} can be characterized as
\begin{equation}
\begin{aligned}
    \tilde \X_k &\in \mathcal{R}_\x(\tilde \Z_k,\Px^{(k-1)}) \defmath \{\tilde \Z_k\}\oplus \Px^{(k-1)} \B^{T \nx},\\
    \tilde \U_k &\in \mathcal{R}_\u(\tilde \V_k,\Pu^{(k-1)}) \defmath \{\tilde \V_k\}\oplus \Pu^{(k-1)} \B^{T \nx}.
\end{aligned}
\label{eq:reach}
\end{equation}

We now consider the conditions that the filter $\mathbf \Sigma$ must satisfy for the reachable set of~\eqref{eq:filter_LTV} to include the reachable set of~\eqref{eq:closed-loop_uncertain_LTV} when both systems are in closed-loop with the same controller $\mathbf{K}$.
This condition allows us to construct a controller for~\eqref{eq:uncertain_LTV} based on the system~\eqref{eq:filter_LTV}.
\begin{proposition}
\label{prop:bound_dist}
Let $\mathbf{K}=\Pu  \Px^{-1}\in \mathcal{L}^{T,\nu \times \nx}$ be the gains of an affine error feedback~\eqref{eq:affine_contr} parameterized as in Proposition~\ref{prop:slp} and let the nominal trajectories $\Z, \V$ satisfy~\eqref{eq:nom_dyn}. For any realization of $\theta\in\Theta$, the reachable set of the uncertain \ac{LTV} system~\eqref{eq:closed-loop_uncertain_LTV} is a subset of the reachable set of the filter-based \ac{LTV} system~\eqref{eq:filter_LTV}, starting from the same initial condition, if the inequality holds
\begin{equation}
    \onenorm{[ A_{\theta,k} \theta \Px^{(k-1)} + B_{\theta,k} \theta \Pu^{(k-1)}  ,  P_{\theta, k} \theta,   P_k ] } \le \sigma_{k,i},
\label{eq:filer_constr}
\end{equation}
for each time step $k\in \mathbb{N}_T$, each vertex $ \theta \in \vertx{\Theta}$, and for all $ i\in \mathbb{N}_{\nx}$, where $\e_i$ is the $i^\text{th}$ row of the identity matrix $I_{\nx}$, and $\vertx{\Theta}$ is the set of vertices of $\Theta$.
\end{proposition}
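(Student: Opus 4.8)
The plan is to prove the reachable-set inclusion by showing that, along every trajectory of the uncertain system~\eqref{eq:closed-loop_uncertain_LTV}, the lumped disturbance $\E_k$ can be reproduced by an admissible filter noise, i.e.\ $\E_k\in\mathbf\Sigma_k\B^\nx$. The key observation I would exploit is that both systems~\eqref{eq:closed-loop_uncertain_LTV} and~\eqref{eq:filter_LTV} share the same nominal matrices $\tilde{\mathbf A},\tilde{\mathbf B}$ and the same feedback $\mathbf K=\P_\u\P_\x^{-1}$, so the map from the additive disturbance input to $(\Delta\tilde\X,\Delta\tilde\U)$ is the identical causal linear operator $(I-\mathbf Z\tilde{\mathbf A}-\mathbf Z\tilde{\mathbf B}\mathbf K)^{-1}$; the only difference is that the filter input ranges over $\mathbf\Sigma\B^{T\nx}$ whereas the uncertain input is $\E$. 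Hence, if I set $\tilde\W'_k\defmath\mathbf\Sigma_k^{-1}\E_k$, the uncertain trajectory coincides with the filter trajectory of~\eqref{eq:sys_rep} driven by $\tilde\W'$, and once I prove $\tilde\W'_k\in\B^\nx$ for every $k$, membership in the reachable set~\eqref{eq:reach} follows immediately.

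First I would carry out the core per-step bound. Fixing $\theta\in\Theta$ and a realization, substituting the disturbance response~\eqref{eq:decomp}, namely $\Delta\tilde\X_k=\P_\x^{k-1}\tilde\W'$ and $\Delta\tilde\U_k=\P_\u^{k-1}\tilde\W'$, together with $\d_k=P_{\theta,k}\theta+P_k\hat w_k$ for some $\hat w_k\in\B^\nwt$, turns the lumped disturbance~\eqref{eq:uncertain_LTV} into $\E_k=(A_{\theta,k}\theta\P_\x^{k-1}+B_{\theta,k}\theta\P_\u^{k-1})\tilde\W'+P_{\theta,k}\theta+P_k\hat w_k$. Bounding the $i^\text{th}$ component by Hölder's inequality, pairing $\|\cdot\|_1$ against $\|\tilde\W'\|_\infty\le1$ and $\|\hat w_k\|_\infty\le1$, and using that the $1$-norm of a horizontally concatenated row splits as the sum of the $1$-norms of its blocks, I obtain $|\e_i^\top\E_k|\le\onenorm{[A_{\theta,k}\theta\P_\x^{k-1}+B_{\theta,k}\theta\P_\u^{k-1},\ P_{\theta,k}\theta,\ P_k]}$, which is precisely the left-hand side of~\eqref{eq:filer_constr}. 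Thus the hypothesis yields $|\e_i^\top\E_k|\le\sigma_{k,i}$ for all $i$, i.e.\ $\tilde\W'_k\in\B^\nx$.

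The remaining structural points are the induction that justifies the substitution and the reduction from all $\theta\in\Theta$ to the vertices $\Theta_v$. Because the system-level maps $\P_\x^{k-1},\P_\u^{k-1}$ are strictly causal, the block row indexed $k-1$ acting only on the noise blocks $0,\dots,k-1$, the quantity $\E_k$ depends on $(\Delta\tilde\X_k,\Delta\tilde\U_k)$ and hence only on $\tilde\W'_0,\dots,\tilde\W'_{k-1}$. I would therefore run a strong induction on $k$: the base case $\Delta\tilde\X_0=0$ is immediate, and assuming $\tilde\W'_0,\dots,\tilde\W'_{k-1}\in\B^\nx$ places $(\Delta\tilde\X_k,\Delta\tilde\U_k)$ in the filter tube~\eqref{eq:reach}, which legitimizes the substitution of the previous paragraph and delivers $\tilde\W'_k\in\B^\nx$. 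For the vertex reduction, I would note that the bracketed matrix in~\eqref{eq:filer_constr} is affine in $\theta$, so $\theta\mapsto\onenorm{[\cdots]}$ is convex on the polytope $\Theta$ and attains its maximum at a vertex; verifying~\eqref{eq:filer_constr} over $\Theta_v$ therefore certifies it for all $\theta\in\Theta$.

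I expect the main obstacle to be this induction step, precisely the circular-looking dependence of the lumped disturbance $\E_k$ on the realized error state $(\Delta\tilde\X_k,\Delta\tilde\U_k)$, which is itself only guaranteed to lie in the tube once $\E_k\in\mathbf\Sigma_k\B^\nx$ has been established. Resolving this cleanly hinges on the strict causality of the system-level maps, so that $\E_k$ is determined by strictly earlier noise, which breaks the circularity. I would also need to be careful that a single shared $\theta$ is used throughout the three blocks of~\eqref{eq:filer_constr}, rather than independent parameters in the $A_{\theta,k},B_{\theta,k}$ and the $P_{\theta,k}$ terms, as the proposition intends for a fixed unknown parameter; this coupling is exactly what keeps the over-bound non-conservative.
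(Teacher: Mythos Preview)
Your proposal is correct and follows essentially the same approach as the paper's proof: reduce the reachable-set inclusion to $\E_k\in\mathbf\Sigma_k\B^\nx$, substitute the system-level representation~\eqref{eq:decomp} for $(\Delta\tilde\X_k,\Delta\tilde\U_k)$, bound componentwise via the $1$-norm/H\"older argument, and reduce to vertices by convexity of the left-hand side in $\theta$. Your explicit induction on $k$ using strict causality to break the circular dependence is a welcome clarification of a point the paper leaves implicit by directly maximizing over arbitrary $\tilde\W\in\B^{T\nx}$.
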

\begin{proof}
The proof is based on the results from~\cite{Chen2022RobustSynthesis}. As the initial condition $\Delta \tilde \X_0$ and the matrices $\tilde{\mathbf A}$, $\tilde {\mathbf B}$ and $\mathbf{K}$ are identical for the \ac{LTV} systems~\eqref{eq:closed-loop_uncertain_LTV} and~\eqref{eq:filter_LTV}, it suffices to show that the disturbance $\E$ in~\eqref{eq:closed-loop_uncertain_LTV} is contained in the filtered ball~$\mathbf \Sigma\B ^{T\nx}$, for all $\Delta \X$, $\Delta \U$, $\d$, i.e.,
\begin{equation}
   \E_k \stackrel{\eqref{eq:uncertain_LTV}}{=} \Delta \tilde A _k\Delta\tilde  \X_k + \Delta \tilde{B}_k\Delta  \tilde \U_k+ \d_k \in \mathbf \Sigma \B ^{T \nx}.
\end{equation}
Hence, in element-wise notation, the following inequality
\begin{equation}
| \e_i^\top (\Delta\tilde  A_k \Delta\tilde  \X_k + \Delta\tilde  B_k \Delta\tilde  \U_k + \d_k)| \stackrel{\eqref{eq:Sigma}}{\le} \sigma_{k,i},
\label{eq:prop_eq_proof_1}
\end{equation}
has to be satisfied for any realization of $\d_k \in \tilde{\mathcal{D}}_k$, any $\theta\in \Theta$ and for each time step $k\in \mathbb{N}_T$.

After replacing $\Delta \tilde \X_k$, $\Delta\tilde \U_k$ using~\eqref{eq:decomp}, and $\Delta\tilde  A_k$, $\Delta\tilde B_k$ by their definitions~\eqref{eq:def_Delta}, we use the triangle inequality and take the max of its left-hand side. This ensures that the inequality is satisfied for any $\tilde \W_k\in \B^\nx$ and $\d_k \in \tilde{\mathcal{D}}_k$, i.e.,
\begin{equation}
\begin{aligned}
\max_{\tilde{\W}\in {\B^{T\nx}}} |\e_i^\top (A_{\theta,k} \theta \Px^{(k-1)} &+ B_{\theta,k} \theta \Pu^{(k-1)}) \tilde \W| \\
&\quad + \max_{\d_k \in \tilde{\mathcal{D}}_k} | \e_i^\top \d_k | \le \sigma_{k,i}.    
\end{aligned}
\label{eq:proof_bounds_3}
\end{equation}

Then, we leverage the convexity of~\eqref{eq:proof_bounds_3} in $\theta$ and that the maximum of a convex function over a convex polytope is achieved at the vertices. We finally obtain~\eqref{eq:filer_constr} using the definition of the 1-norm.
\end{proof}

\begin{remark}
The results presented in this paper can be extended to a general invertible filter $\mathbf{\Sigma} \in \mathcal{L}^{T,\nx\times \nx}$~\cite{Chen2022RobustSynthesis}.
\end{remark}

By applying Propositions~\ref{prop:slp} and~\ref{prop:bound_dist}, the reachable sets~\eqref{eq:reach} provide an over-approximation of those resulting from~\eqref{eq:closed-loop_uncertain_LTV}, allowing us to derive sufficient conditions for robust constraint satisfaction, as stated in the following proposition.

\begin{proposition}
\label{prop:lin}
If there exist matrices $\Px$, $\Pu$, $\mathbf{\Sigma}$ and a nominal trajectory satisfying~\eqref{eq:slp},~\eqref{eq:nom_dyn},~\eqref{eq:filer_constr}, and
\begin{equation}
c_i\T \sv{\tilde\Z_k}{ \tilde\V_k}  + b_i+ \| c_i\T \P^{(k-1)}\|_1 \le 0~\K~\I,
    \label{eq:nom_noise}
\end{equation}
then, it is guaranteed that, for any $\d\in  \tilde{\mathcal{D}}^{0:T-1}$,
\begin{equation}
    c_i\T \sv{\tilde \X_k}{ \tilde\U_k}  + b_i \le 0~\K~\I,
 \label{eq:lin_const}
\end{equation}
with $\sv{ \tilde \X_k}{\tilde  \U_k}$ according to~\eqref{eq:closed-loop_uncertain_LTV} and~\eqref{eq:nom_dyn}, with ${\mathbf{K}} = \Pu  \Px^{-1}$.
\end{proposition}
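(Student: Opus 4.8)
The plan is to chain the three preceding propositions into a single constraint-tightening argument, reducing robust satisfaction over the true uncertain reachable set to a deterministic tightened condition on the nominal trajectory. First I would invoke Proposition~\ref{prop:bound_dist}: since $\P_\x$, $\P_\u$, $\mathbf\Sigma$ satisfy both the subspace identity~\eqref{eq:slp} and the filter condition~\eqref{eq:filer_constr}, the reachable set of the true uncertain \ac{LTV} system~\eqref{eq:closed-loop_uncertain_LTV}, in closed loop with $\mathbf{K}=\P_\u\P_\x^{-1}$, is contained in the reachable set of the filter-based system~\eqref{eq:filter_LTV} for every $\theta\in\Theta$ and every admissible disturbance $\d\in\tilde{\mathcal{D}}^{0:T-1}$. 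Hence any realized pair $\sv{\tilde\X_k}{\tilde\U_k}$ generated by~\eqref{eq:closed-loop_uncertain_LTV} necessarily lies in the filtered reachable set, which is the object I can describe explicitly.

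Next I would characterize that filtered set. By Proposition~\ref{prop:slp}, the closed-loop maps of~\eqref{eq:filter_LTV} admit the representation~\eqref{eq:sys_rep}, so the decomposition~\eqref{eq:decomp} gives $\sv{\Delta\tilde\X_k}{\Delta\tilde\U_k}=\P^{k-1}\tilde\W$ for a single $\tilde\W\in\B^{T\nx}$. Adding the nominal part from~\eqref{eq:nom_dyn} yields the stacked reachable set~\eqref{eq:reach}, namely $\sv{\tilde\X_k}{\tilde\U_k}\in\sv{\tilde\Z_k}{\tilde\V_k}\oplus\P^{k-1}\B^{T\nx}$. It is worth stressing that state and input are driven by the \emph{same} $\tilde\W$, so this is the image of one ball rather than a Minkowski sum of two independent ones. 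Combined with the first step, every $(x,u)$ reachable under~\eqref{eq:closed-loop_uncertain_LTV} can be written as $\sv{\tilde\Z_k}{\tilde\V_k}+\P^{k-1}\tilde\W$ for some $\tilde\W\in\B^{T\nx}$.

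Then I would bound the constraint function over this set by its worst case. Evaluating the left-hand side of~\eqref{eq:lin_const} and maximizing over the unit ball,
\begin{align}
\max_{\tilde\W\in\B^{T\nx}} &~ c_i\T\!\left(\sv{\tilde\Z_k}{\tilde\V_k}+\P^{k-1}\tilde\W\right)+b_i \nonumber\\
&= c_i\T\sv{\tilde\Z_k}{\tilde\V_k}+b_i+\|c_i\T\P^{k-1}\|_1,\nonumber
\end{align}
where the key identity is that the support function of $\B^{T\nx}$ in direction $a$ equals $\|a\|_1$, i.e., $\max_{\|\tilde\W\|_\infty\le1} a\T\tilde\W=\|a\|_1$. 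Condition~\eqref{eq:nom_noise} asserts precisely that this worst-case value is nonpositive; hence~\eqref{eq:lin_const} holds for every $\tilde\W\in\B^{T\nx}$, and therefore for every $\theta\in\Theta$ and $\d\in\tilde{\mathcal{D}}^{0:T-1}$, uniformly over all $k$ and $i$.

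The support-function evaluation and the stacking of~\eqref{eq:reach} are routine; the substantive step is the reachable-set inclusion of Proposition~\ref{prop:bound_dist}, which is what lets us replace the intractable, state- and input-dependent lumped disturbance $\E$ by the fixed filtered ball $\mathbf\Sigma\B^{T\nx}$. Once that inclusion is secured, robust satisfaction over the true reachable set collapses to the single deterministic tightened inequality~\eqref{eq:nom_noise}, and no further estimation is required.
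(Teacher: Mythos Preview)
Your argument is correct and is exactly the standard \ac{SLS} constraint-tightening proof: reachable-set inclusion via Proposition~\ref{prop:bound_dist}, explicit parameterization of the filtered reachable set via Proposition~\ref{prop:slp} and~\eqref{eq:decomp}, and then the dual-norm identity $\max_{\|\tilde\W\|_\infty\le 1} a\T\tilde\W=\|a\|_1$ to collapse the worst case to~\eqref{eq:nom_noise}. The paper itself does not spell this out but simply defers to~\cite{Chen2022RobustSynthesis}; your write-up is precisely the argument that reference supplies, so there is nothing to add.
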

\begin{proof}
    See~\cite{Chen2022RobustSynthesis}.
\end{proof}
To derive an affine error feedback that robustly satisfies constraints for uncertain \ac{LTV} dynamics such as~\eqref{eq:uncertain_LTV}, we leverage Propositions~\ref{prop:slp},~\ref{prop:bound_dist}, and~\ref{prop:lin}.
Specifically, we parameterize the controller used in the \ac{LTV} system~\eqref{eq:filter_LTV} using Proposition~\ref{prop:slp}, and constrain its filter using Proposition~\ref{prop:bound_dist} to ensure that the reachable set of the \ac{LTV} system~\eqref{eq:filter_LTV} includes that of the uncertain \ac{LTV} system~\eqref{eq:uncertain_LTV}.
Furthermore, we use Proposition~\ref{prop:lin} to derive {sufficient} conditions for robust constraint satisfaction for the uncertain \ac{LTV} system~\eqref{eq:uncertain_LTV}. By combining these results, we obtain the following optimization problem that solves Problem~\eqref{eq:prob_form} for an affine feedback policy in the particular case of uncertain \ac{LTV} dynamics:
\begin{subequations}
    \begin{align}
    \min_{\substack{ \tilde \Z,  \tilde \V_0, \tilde \V,\\ \P, \mathbf\Sigma}}  &  J_T(\xz,  \Z, \V,\P),\\
    \text{s.t.}  & \left[ I - \mathbf{Z}\tilde{\mathbf{A}},~ -\mathbf{Z}\tilde{\mathbf{B}}\right]\begin{bmatrix}\Px\\\Pu \end{bmatrix}= \mathbf\Sigma,\\
    &\Km:\nonumber\\
    &  \tilde \Z_{k+1} =  \tilde A_k \tilde  \Z_k +  \tilde B_k  \tilde \V_k,~\tilde \Z_0 = \xz,\\
    & \forall i\in \mathbb{N}_{n_\x}:\nonumber\\
    &     \onenorm{[\tilde \Delta_k \theta \P^{(k-1)},  P_{\theta, k} \theta,   P_k ] } \le \sigma_{i,k}~ \forall \theta \in \vertx{\Theta},\label{eq:sls_param_overbound}\\
    &  \| c_i\T \P^{(k-1)} \|_1+c_i\T \sv{\tilde \Z_k}{\tilde \V_k} + b_i  \le 0~\I,
    \end{align}
    \label{eq:sls_param}
\end{subequations}
where we use the shorthand notation $\tilde \Delta_k \defmath [ {A}_{\theta, k},  {B}_{\theta, k}]$.
\begin{remark}
If the set of uncertain parameters $\Theta$ is reduced to a singleton, which implies the absence of parametric uncertainties,~\eqref{eq:sls_param} can be simplified to a classical \ac{SLS} problem, as shown, e.g., in~\cite[Eq.(15)]{leeman2022predictive} or~\cite{sieber2021system}.
\end{remark}
\subsection{Robust nonlinear finite-horizon optimal control problem}
\label{sec:rnoc_sls}
We can use the system level parameterization of the affine error feedback for uncertain \ac{LTV} systems in~\eqref{eq:error_uncertain_LTV} on the similar uncertain \ac{LTV} system~\eqref{eq:LTV_error_nl}.
The uncertain \ac{LTV} system~\eqref{eq:LTV_error_nl} is accounting for the linearization errors of the nonlinear system as described in Proposition~\ref{prop:lin_err}. 
With this in mind, we can now present the central result of this paper.
We utilize Proposition~\ref{prop:bound_dist} to derive a convex over-approximation of the lumped uncertainties that arise from the additive noise, the multiplicative disturbance, and the linearization errors. 
To construct the robust optimal control problem~\eqref{eq:sls}, we also leverage the system level parameterization of affine error feedback (Proposition~\ref{prop:slp}) and the sufficient condition for robust constraint satisfaction (Proposition~\ref{prop:lin}).
The solution to the robust optimal control problem~\eqref{eq:sls} yields an error feedback and a nonlinear nominal trajectory that, when applied to the uncertain nonlinear system, guarantee robust constraint satisfaction, as shown in Theorem~\ref{thm:1}.
\begin{figure*}[!t]
\begin{subequations}
    \begin{align}
    \min_{ \Px, \Pu,\mathbf\Sigma,\Z,\V_0,\V,\tube} \quad &  J_T(\xz, \Z,\V, \P), \\
    \text{s.t.}\quad   &\left[ I - \mathbf{Z}\bar{\mathbf{A}}(\Z,\V),~ - \mathbf{Z}\bar{\mathbf{B}}(\Z,\V)  \right]  \begin{bmatrix}
     \Px\\
     \Pu\\
    \end{bmatrix} = \mathbf\Sigma,\label{eq:sls_slp}\\
    & \Z_{k+1} = \fl(\Z_k, \V_k)~\Km,~\Z_0= \xz,\label{eq:nonlinear_sls_nom}\\
    & \|\e_i^\top[ \Delta_k \theta \P^{(k-1)} ,\fa(\Z_k, \V_k)\theta, E, \tube_k^2 \mu]\|_1 \le \sigma_{i,k}~\Km ~\forall i\in \mathbb{N}_{\nx}~ \forall \theta \in \vertx{\Theta},\label{eq:bound_constr} \\
    & \| c_i\T  \P^{(k-1)}\|_1 + c_i\T \sv{\Z_k}{\V_k} + b_i \le 0~\Km~\I,\label{eq:cons_nonlinear_SLS}\\
    &\|\P^{(k-1)}\|_\infty\le \tube_k~\Km.
    \end{align}
    \hrulefill
    \vspace*{-15pt}
    \label{eq:sls}
\end{subequations}
\end{figure*}
\begin{theorem}
Given Assumptions \ref{assum:0}, \ref{assum:1}, \ref{assum:2} and \ref{assum:3}, suppose optimization problem \eqref{eq:sls} is feasible for some $\bar x$. Then, the affine error feedback $\U = \V^\star + \mathbf{K}^\star (\X - \Z^\star),~\mathbf{K}^\star = \Pu^\star\Px^{\star -1},~\U_0 = \V_0^\star$ obtained from~\eqref{eq:sls} provides a feasible solution to Problem~\eqref{eq:prob_form}, i.e., the closed-loop trajectories of system~\eqref{eq:dynamics} under this error feedback
robustly satisfy the constraints~\eqref{eq:prob_form_cons}.
\label{thm:1}
\end{theorem}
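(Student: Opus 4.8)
The plan is to reduce Theorem~\ref{thm:1} to the already-established propositions for uncertain parametric \ac{LTV} systems, the only genuinely new ingredient being the self-consistent treatment of the state-dependent linearization error. Concretely, I would show that the closed-loop trajectory of the nonlinear system~\eqref{eq:dynamics} under the optimal feedback $\mathbf{K}^\star=\P_\u^\star\P_\x^{\star-1}$ is contained, at every time step, in the \ac{SLS} reachable sets~\eqref{eq:reach} centered at the optimal nominal trajectory $\Z^\star,\V^\star$. Once this containment holds, robust constraint satisfaction~\eqref{eq:prob_form_cons} follows exactly as in Proposition~\ref{prop:lin}: for any disturbance and parameter realization, $c_i\T\sv{\X_k}{\U_k}+b_i = c_i\T\sv{\Z_k}{\V_k}+b_i + c_i\T\P^{k-1}\tilde\W \le c_i\T\sv{\Z_k}{\V_k}+b_i+\|c_i\T\P^{k-1}\|_1\le 0$ by H\"older's inequality and $\|\tilde\W\|_\infty\le 1$, with the terminal constraint treated analogously via $\P_\x^{T-1}$. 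The entire difficulty thus concentrates on establishing this reachable-set containment for the genuinely nonlinear closed loop.

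First I would fix the identification between the nonlinear error system~\eqref{eq:LTV_error_nl} and the abstract system~\eqref{eq:uncertain_LTV}: the nominal matrices are the Jacobians $\bar A(\Z_k,\V_k),\bar B(\Z_k,\V_k)$ evaluated \emph{along the nonlinear nominal trajectory}~\eqref{eq:nonlinear_sls_nom}, the multiplicative uncertainties are $\Delta\tilde A_k=A_\theta(\Z_k,\V_k)\theta,\ \Delta\tilde B_k=B_\theta(\Z_k,\V_k)\theta$, and the additive term is $\mathbf{d}_k=\fa(\Z_k,\V_k)\theta + r + \W_k$. Note the error system is \emph{exact}, with the remainder $r$ absorbing the linearization mismatch, so no approximation is incurred beyond over-bounding $r$. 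By Proposition~\ref{prop:lin_err} and~\eqref{eq:d_set}, $\mathbf{d}_k\in\fa(\Z_k,\V_k)\Theta\oplus[E,\|\mathbf{e}_k\|_\infty^2\mu]\B^{\nw+\nx}$, which is precisely the additive set charged to the filter constraint~\eqref{eq:bound_constr} \emph{provided} $\|\mathbf{e}_k\|_\infty\le\tube_k$. This is the crux: the online over-bound $\tube_k^2\mu$ of the linearization error is valid only if the realized error magnitude does not exceed the tube $\tube_k$, yet $\tube_k$ upper-bounds the very reachable set it is meant to generate --- a circular dependency that must be broken.

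The main step is therefore a self-consistency argument by induction on the time index, with hypothesis $\|\mathbf{e}_j\|_\infty\le\tube_j$ for all $j\le k$. The base case is immediate, since $\Delta\X_0=\bar x-\bar x=0$ and $\U_0=\V_0^\star$ give $\mathbf{e}_0=0$. For the inductive step, the hypothesis guarantees that the realized additive disturbances $\mathbf{d}_j$ and the multiplicative contributions $\Delta\tilde A_j\Delta\X_j+\Delta\tilde B_j\Delta\U_j$ over steps $0,\dots,k$ lie in the sets charged to~\eqref{eq:bound_constr} (using the vertex reduction over $\theta\in\Theta_v$ and convexity in $\theta$ from Proposition~\ref{prop:bound_dist}). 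Because the error dynamics are causal and block lower-triangular, $\mathbf{e}_{k+1}$ is determined by disturbances at steps $\le k$ only, so the reachable-set inclusion of Proposition~\ref{prop:bound_dist} --- invoked over those steps --- yields $\mathbf{e}_{k+1}\in\P^{k}\B^{T\nx}$; the tube constraint $\|\P^{k}\|_\infty\le\tube_{k+1}$ then closes the induction via $\|\mathbf{e}_{k+1}\|_\infty\le\|\P^{k}\|_\infty\le\tube_{k+1}$.

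With the hypothesis established along the horizon, the realized error lies in $\P^{k-1}\B^{T\nx}$ at every step, hence $\X_k\in\mathcal{R}_\x(\Z_k,\P_\x^{k-1})$ and $\U_k\in\mathcal{R}_\u(\V_k,\P_\u^{k-1})$, and the terminal state lies in $\{\Z_T\}\oplus\P_\x^{T-1}\B^{T\nx}$; Proposition~\ref{prop:lin} then delivers robust satisfaction of~\eqref{eq:prob_form_cons} for all $\W_k\in\mathcal{W}$ and all $\theta\in\Theta$. The gain $\mathbf{K}^\star=\P_\u^\star\P_\x^{\star-1}$ is well defined because feasibility of~\eqref{eq:sls_slp} with positive-definite diagonal $\mathbf\Sigma$ makes $\P_\x^\star$ invertible (Proposition~\ref{prop:slp}). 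I expect the inductive self-consistency of the tube --- confirming that $\tube_k^2\mu$ stays a valid over-bound precisely because the reachable set it induces remains inside the tube --- to be the delicate part, and I would be careful to justify that the causal, triangular structure legitimately permits Proposition~\ref{prop:bound_dist} to be applied step by step rather than only over the full horizon.
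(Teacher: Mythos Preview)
Your proposal is correct and follows essentially the same approach as the paper: identify the nonlinear error dynamics~\eqref{eq:LTV_error_nl} with the abstract uncertain \ac{LTV} system~\eqref{eq:uncertain_LTV}, invoke Propositions~\ref{prop:slp}--\ref{prop:lin} with the nominal Jacobians $\bar A(\Z_k,\V_k),\bar B(\Z_k,\V_k)$, and close the argument via the tube bound $\|\mathbf{e}_k\|_\infty\le\|\P^{k-1}\|_\infty\le\tube_k$. The one notable difference is that the paper's proof simply asserts the final over-approximation $\|\mathbf{e}_k\|_\infty\le\tube_k$ after applying Proposition~\ref{prop:bound_dist} globally, whereas you correctly flag the circularity (the validity of $\tube_k^2\mu$ as an over-bound of the remainder presupposes the very containment it is used to establish) and resolve it by an explicit forward induction exploiting causality; your treatment is more careful on this point than the paper's terse argument, but the underlying mechanism is the same.
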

\begin{proof}
First, the uncertain nonlinear system~\eqref{eq:dynamics} is conservatively reformulated as the uncertain \ac{LTV}~\eqref{eq:LTV_error_nl} using a Taylor series approximation with respect to the nominal trajectory~\eqref{eq:nonlinear_sls_nom}.
Then, we apply Proposition~\ref{prop:slp} to an equivalent filter-based error system
\begin{equation}
\begin{aligned}
 \Delta \X_0 &=0,\\
\Delta \X_{k+1} &= \bar A(\Z_k, \V_k) \Delta\X_k + \bar B(\Z_k, \V_k) \Delta\U_k + \mathbf{\Sigma}_k \tilde \W_k,
\label{eq:proxy_LTV}
\end{aligned}
\end{equation}
constructed based on the \ac{LTV} error~\eqref{eq:LTV_error_nl}. The constraint~\eqref{eq:sls_slp} implies that the trajectories of~\eqref{eq:proxy_LTV} satisfy
\begin{equation}
    \begin{bmatrix}
        \X - \Z\\
        \U - \V\\
    \end{bmatrix} =
    \begin{bmatrix}
        \Px\\
        \Pu
    \end{bmatrix}
    \tilde{\W}.
    \label{eq:error}
\end{equation}
Thus, we apply Proposition~\ref{prop:bound_dist} to guarantee that the reachable sets of~\eqref{eq:error} include that of~\eqref{eq:dynamics}. Then, the definition of the disturbance set~\eqref{eq:d_set} leads to the following inequality$~\Km, ~\forall \theta \in \Theta$
\begin{equation}
    \|\e_i^\top[ \Delta_k \theta \P^{(k-1)},\fa(\Z_k, \V_k)\theta, E, \|\mathbf{e}_k\|_\infty^2 \mu]\|_1 \le \sigma_{i,k},
\label{eq:proof_step_over_bound}
\end{equation}
with $ \Delta_k \defmath [ A_\theta(\Z_k, \V_k) ,  B_\theta(\Z_k, \V_k) ]$.
Additionally, the constraint~\eqref{eq:cons_nonlinear_SLS} guarantees that the constraints are robustly satisfied as per Proposition~\ref{prop:lin}. Finally, using~\eqref{eq:error}, the error $\mathbf{e}_k$ in~\eqref{eq:proof_step_over_bound} can be over-approximated with
\begin{equation}
    \|\mathbf{e}_k\|_\infty \le \|\P^{(k-1)}\|_\infty\le \tube_k,
\end{equation}
where $\tube_k$ is a jointly optimized auxiliary variable.
\end{proof}
The optimization problem~\eqref{eq:sls} jointly optimizes the nonlinear trajectory $(\Z, \V)$, error feedback $(\Px, \Pu)$, convex over-bounds $\bm{\Sigma}$, and linearization error bounds through $\bm{\tau}$. It is a new formulation compared to the one proposed in~\cite{Leeman2023RobustSynthesis}, even when the set of uncertain parameters $\Theta$ reduces to a singleton, indicating that there are no parametric uncertainties. This key difference is due to using a filter that lumps the effects of uncertainties, as shown in Proposition~\ref{prop:bound_dist}.
In particular, the reachable set of the nonlinear system~\eqref{eq:nonlinear_dyn} in closed-loop with the affine error feedback computed as in Theorem~\ref{thm:1} satisfy
\begin{equation}
\begin{aligned}
     \X_k &\in \mathcal{R}_\x( \Z_k^\star,\Px^{(k-1)\star }) \defmath \{ \Z_k^\star\}\oplus \Px^{(k-1)\star } \B^{T \nx},\\
     \U_k &\in \mathcal{R}_\u( \V_k^\star,\Pu^{(k-1)\star }) \defmath \{ \V_k^\star\}\oplus \Pu^{(k-1)\star } \B^{T \nx},
\end{aligned}
\label{eq:reach_NL}
\end{equation}
resulting in fewer nonconvex constraints compared to the formulation in~\cite{Leeman2023RobustSynthesis}.
However, as opposed to a linear~\ac{SLS}~\cite{Anderson2019}, the constraint~\eqref{eq:sls_slp} is in general not convex, since the nonlinear trajectory $(\Z,\V)$ is jointly optimized.
In the next section, we discuss two performance enhancement techniques enabled by the proposed formulation, which would not be possible with the previously proposed approach~\cite{Leeman2023RobustSynthesis}.

\section{Remarks on performance enhancement}
In this section, we discuss how to impose robust performance guarantees with convex constraints and perform a posteriori estimation of the model mismatch.

\subsection{Robust performance guarantees}
The parameterization of the error feedback presented in Problem~\eqref{eq:sls} can be augmented to obtain robust performance guarantees.
In particular, we study the performance of the resulting closed-loop nonlinear error dynamics~\eqref{eq:LTV_error_nl} and consider the controlled output signal, denoted by $\Y \defmath \mathbf{C}\Delta\X + \mathbf{D}\Delta\U$, where $\mathbf{C}\in\mathcal{L}^{T,\ny \times \nx}$ and $\mathbf{D}\in  \mathcal{L}^{T,\ny \times \nu}$ are typically diagonal and user-defined.
The goal is to satisfy the performance requirement given by
\begin{equation}
    \|\Y\|_\infty\le \gamma~\forall \W\in \mathcal{W}^T~\forall\theta\in\Theta,\label{eq:perf}
\end{equation}
where $\gamma\in\R$ is a user-defined performance index (see, e.g.,~\cite{magni2003robust, matni2020robust} for related definitions).

\begin{corollary}There exists a causal error feedback parameterized by~\eqref{eq:sls_slp} such that the robust performance constraint~\eqref{eq:perf} is satisfied, if the following inequality holds
\begin{equation}
    \|\mathbf{C}\Px + \mathbf{D} \Pu\|_\infty\le \gamma.
    \label{eq:gamma}
\end{equation}
\end{corollary}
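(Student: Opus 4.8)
The plan is to read this corollary as a direct consequence of the reachable-set machinery already built for Theorem~\ref{thm:1}, augmented by a single induced-norm identity. First I would recall from~\eqref{eq:error} that, under the filter-based closed loop~\eqref{eq:proxy_LTV}, the stacked error state and input admit the exact representation $\Delta\X = \P_\x\tilde\W$ and $\Delta\U = \P_\u\tilde\W$ driven by unit noise $\tilde\W\in\B^{T\nx}$. Substituting these into the controlled output $\Y = \mathbf{C}\Delta\X + \mathbf{D}\Delta\U$ collapses the signal into the single affine map
\[
    \Y = (\mathbf{C}\P_\x + \mathbf{D}\P_\u)\,\tilde\W .
\]

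Next I would transfer this bound back to the true uncertain nonlinear system. Because the feedback is parameterized through~\eqref{eq:sls_slp} and the filter constraint~\eqref{eq:bound_constr} holds, Proposition~\ref{prop:bound_dist} (exactly as invoked in the proof of Theorem~\ref{thm:1}) guarantees that for every admissible $\W\in\mathcal{W}^T$ and every $\theta\in\Theta$ the realized error $(\Delta\X,\Delta\U)$ of the closed-loop nonlinear dynamics lies in the reachable set of the filter surrogate, i.e.\ it is reproduced by some $\tilde\W\in\B^{T\nx}$ in the representation above. Applying the linear output map preserves this inclusion, so the worst-case output over all true disturbances and parameters is dominated by the worst case over the unit ball, and it suffices to bound $\|\Y\|_\infty$ for arbitrary $\tilde\W$ with $\|\tilde\W\|_\infty\le 1$.

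The closing step is the standard induced-norm identity: maximizing $\|(\mathbf{C}\P_\x + \mathbf{D}\P_\u)\tilde\W\|_\infty$ over $\tilde\W\in\B^{T\nx}$ returns exactly the matrix $\ell_\infty$-norm $\|\mathbf{C}\P_\x + \mathbf{D}\P_\u\|_\infty$, namely the maximum absolute row sum. Hence the hypothesized inequality~\eqref{eq:gamma} yields $\|\Y\|_\infty\le\gamma$ for every $\W\in\mathcal{W}^T$ and $\theta\in\Theta$, which is precisely the robust performance requirement~\eqref{eq:perf}.

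I do not anticipate a genuinely hard step: all the difficulty is already absorbed into Proposition~\ref{prop:bound_dist}, which supplies the reachable-set inclusion linking the nonlinear closed loop to its filter surrogate. The only point demanding mild care is matching the norms consistently — the $\ell_\infty$ vector norm imposed on $\Y$ in~\eqref{eq:perf} must be paired with the $\ell_\infty$ unit ball on $\tilde\W$ so that the induced object is indeed the row-sum matrix norm written $\|\cdot\|_\infty$ in~\eqref{eq:gamma}; this is what makes the convex constraint an exact, rather than merely sufficient, characterization of the worst-case output over the filtered reachable set.
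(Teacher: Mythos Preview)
Your proposal is correct and matches the argument implicit in the paper: the corollary is stated there without a separate proof, as it is meant to follow directly from the reachable-set inclusion established for Theorem~\ref{thm:1} (via Proposition~\ref{prop:bound_dist} and~\eqref{eq:reach_NL}) together with the induced $\ell_\infty$-norm identity, exactly as you lay out. Your attention to the joint inclusion---that a single $\tilde\W\in\B^{T\nx}$ reproduces both $\Delta\X$ and $\Delta\U$---and to the consistent norm pairing is appropriate and fills in precisely the details the paper leaves tacit.
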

The constraint~\eqref{eq:gamma} is convex in the decision variables, which is not possible with the parameterization used in~\cite{Leeman2023RobustSynthesis}.

\subsection{Learning the model mismatch}
\label{sec:4b}
Since the dynamics~\eqref{eq:dynamics} are affine in the parameter $\theta$, the realized trajectories $\X$, $\U$ can be leveraged to refine the set of all consistent parameter values $\Theta_p\subset \Theta$, based on set-membership estimation~\cite{milanese1991sm}, hence learning the model mismatch $\fa(z, v)\theta$.

\section{Numerical Application Example}
\begin{figure*}[ht!]
    \centering
    \includegraphics[width = \textwidth]{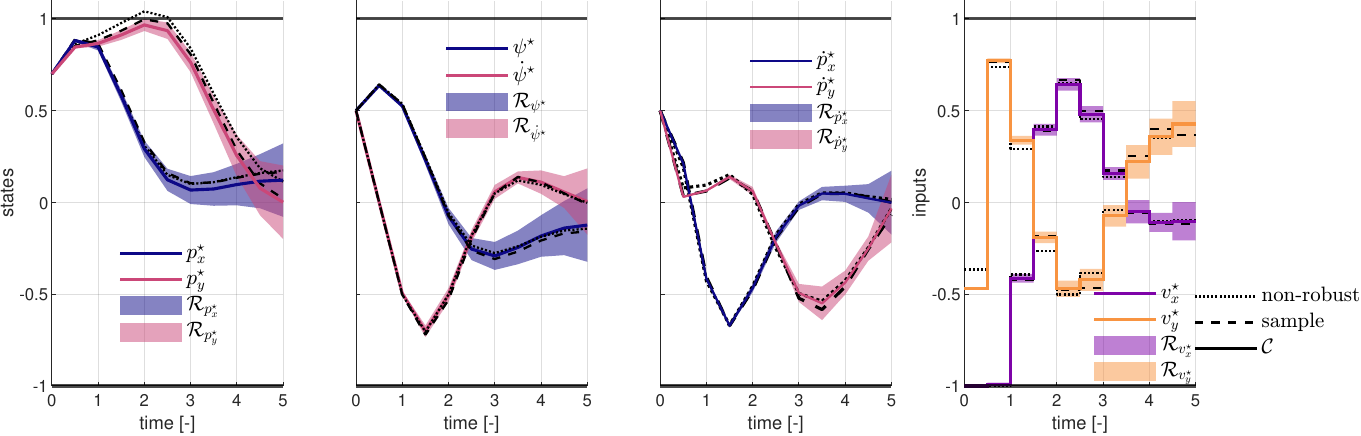}
    \caption{Solution of the \ac{SLS}-based robust nonlinear optimal control problem~\eqref{eq:sls} for a post-capture satellite stabilization simulation example. 
    The figure shows three pairs of states and one pair of inputs. For each pair, both nonlinear nominal trajectories (solid) and corresponding reachable sets (shaded areas) according to~\eqref{eq:reach_NL} are {jointly optimized} to remain within the constraints (bold black). This guarantees that any ``sample'' trajectory (dashed) satisfies the constraints. The ``non-robust'' (dotted) trajectory illustrates a trajectory resulting from an optimal control problem without robustness guarantees.
    }
    \label{fig:fig1}
\end{figure*}
\begin{figure*}[ht!]
    \centering
    \includegraphics[width = \textwidth]{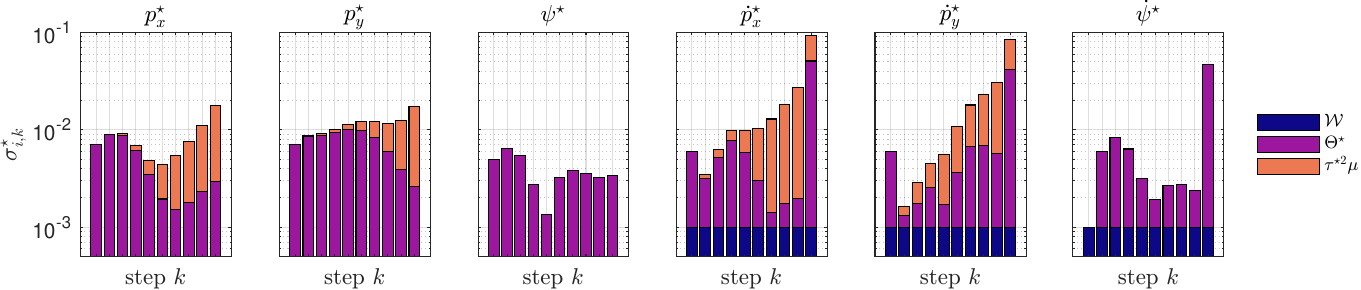}
    \caption{Optimal components of the filter $\mathbf{\Sigma}^\star$ decomposed according to~\eqref{eq:bound_constr} for the problem considered. The filter is designed to lump, at each time step, the effect of the uncertainties coming from the additive noise ($\mathcal{W}$), the parametric uncertainties ($\Theta^\star$), and the linearization errors (${\tube^\star}^2 \mu$).}
    \label{fig:fig2}
\end{figure*}
The considered example is motivated by the problem of removing space debris from Earth's orbit, which is approached by designing a chaser spacecraft to capture targets with largely unknown mass, inertia, position, or velocity. 
The simulation example in the following focuses on the challenge of achieving steady-state operation with significant uncertainty in the inertia.
\subsection{Satellite post-capture stabilization}
We consider a planar rigid-body model to study the problem of post-capture stabilization, as proposed in~\cite{virgili2019Simultaneous,zagaris2018reachability}.  The equations of motion are given by
\begin{equation}
    \ddot{{p}}(t) = R(\psi(t))\frac{(v_x(t), v_y(t))}{m} ,~\ddot \psi(t) = \frac{l\cdot v_x(t)}{j},
\label{eq:nonlin_dyn_rot}
\end{equation}
where $ R(\psi(t))\in\R^{2\times 2}$ is the rotation matrix, ${p} \defmath (p_x, p_y)\in\R^2$ is the relative distance to the target orbit, $m\in\R$ is the mass, $\psi\in\R$ is the relative angle, $j\in\R$ is the uncertain inertia, and $l$ is the moment arm. The state $z \defmath (p_x, p_y, \psi, \dot p_x, \dot p_y, \dot \psi ) \in \R^6$ and the input $v \defmath (v_x, v_y)\in\R^2$ are rendered dimensionless for numerical stability purposes. The dynamics are discretized using a forward-Euler integrator with a step size of 0.5 time units and 10 inner steps. We also impose constraints on the states and inputs, $  - 1 \le z_i \le 1~\forall i\in \mathbb{N}_6$ and $-1 \le v_i\le 1~\forall i\in \mathbb{N}_2$, as well as a robust performance constraint $ \|\Delta \X \|_\infty\le \gamma_\text{max},$ with $\gamma_\text{max} =0.2$.
A small bounded disturbance is applied to the system, described by $E = 10^{-3} [ 0_{3, 3},~ {I}_3]^\top \in \mathbb{R}^{6\times n_\mathrm{w}}$ with $n_\mathrm{w}=3$, according to~\eqref{eq:disturbance_E}. We consider uncertain inertia, such that $1/j \in \{1+ \delta|~ |\delta| \le 0.01 \}$.
We use a nominal cost function
 $J_T(\xz, \Z,\V)=\sum_{k=0}^{T-1}\ell(\Z_k,\V_k)+\ell_f(\Z_T)$, with the stage cost $ \ell(\Z_k,\V_k) \defmath \Z_k \T {Q}\Z_k + \V_k \T {R} \V_k$, and the terminal cost $\ell_f(\Z_T) \defmath \Z_T \T {Q}_f \Z_T,$ with ${Q} = {I}_6 $, ${R} = {I}_3$, $ Q_f = 10 Q$ and the horizon is $T = 10$. 
As common in numerical optimization, we include an additional regularization term in the cost function for numerical stability, such as $J_T + \lambda \bm{\beta}\T\bm{\beta}$, where $\lambda = 10^{-6}$ and $\bm{\beta}$ represents a vector collecting all the decision variables.
We approximate the constant $\mu \approx \text{diag}(0.68, 0.66, 0, 1.98, 1.95,0) \in \mathbb{R}^{6 \times 6}$ from Equation~\eqref{eq:def_mu} for the nonlinear dynamics~\eqref{eq:nonlin_dyn_rot} using a Monte-Carlo simulation.
The initial condition used is $\xz = (0.7,0.7,0.5,0.5,0.5,0.5)$.
\subsection{Results and discussion}
The \ac{NLP}~\eqref{eq:sls} is solved using the solver IPOPT~\cite{wachter2006implementation}, with its default settings, formulated with CasADi~\cite{Andersson2019}\footnote{
An open-source implementation is available at \url{https://gitlab.ethz.ch/ics/nonlinear-parametric-SLS}, doi: \url{https://doi.org/10.3929/ethz-b-000629589}.}.
For the problem considered, Fig.~\ref{fig:fig1} shows the solution of the \ac{NLP}~\eqref{eq:sls}, where the states and inputs are grouped in pairs. For each of the four pairs, we see the nominal trajectories (solid) and the corresponding reachable sets (shaded areas) computed according to~\eqref{eq:reach_NL}.
An illustrative disturbance sequence has been applied. As the proposed design guarantees, the resulting ``sample'' trajectory (dashed) remains within the reachable sets and, hence, within the constraint set (bold black).
The flexible error feedback parameterization allows the tubes to change size differently in each direction to meet the constraints, demonstrating the method's flexibility.

Moreover, Fig.~\ref{fig:fig1} compares our method with its nominal counterpart, which does not optimize error feedback ($\Px = 0, \Pu = 0$) and neglects disturbances ($\mathbf\Sigma =0$). The corresponding ``non-robust'' (dotted) trajectory with disturbances violates the constraints. The violation amplitude could be more pronounced when the constraints are active for extended durations, see e.g.,~\cite{Leeman2023RobustSynthesis}.

To highlight the importance of addressing parametric uncertainties tightly, we compare our method with an offline-overbounded counterpart similar to~\cite{Bujarbaruah2021AUncertainty}, where we over-bound offline the effect of the parametric uncertainties while still optimizing the over-bounding of the linearization error.
In this offline-overbounded approach, we introduce the disturbance set $\alpha \B^\nx$, where $\alpha$ is fixed offline and, for $\alpha = 10^{-2}$, accounts for the combined effect of the additive disturbance and uncertain parameter $\theta f_\theta (\X_k, \U_k)$, i.e., $ \theta f_\theta (x,u) \in 10^{-2} \B^\nx~\forall(x,u)\in\mathcal{C}~{\forall\theta \in \Theta}$.
Table~\ref{table:comparison} shows the optimal cost for different values of $\alpha$. 
The proposed method outperforms the offline-overbounded approach by being able to handle much larger disturbances:
the offline-overbounded approach is infeasible for $\alpha = 0.6 \cdot 10^{-2}$, i.e., $60\%$ of the disturbance amplitude that our method is feasible for. Additionally, the proposed method incurs only a marginal cost increase cost compared to the case that ignores parametric uncertainties, i.e., $\alpha =0$.

\begin{table}[htbp]
  \centering
  \caption{Performance comparison between the proposed method and the offline-overbounded approach.}
  \label{table:comparison}
  \begin{tabular}{lccc}
    \toprule
    & Uncert. set& $J_T (\bar x, \Z^\star, \V^\star)$ & $\alpha$ \\
    \midrule
    \textbf{Our method~\eqref{eq:sls}} &$ \Theta f_\theta(x,u)$ & 18.50& -- \\
    \midrule
    Eq.~\eqref{eq:sls}, &$\alpha \B^\nx$ & infeasible &$0.6 \cdot 10^{-2}$\\
    \cline{3-4}
     ~ with offline- & & 18.62& $0.5 \cdot 10^{-2}$ \\
         \cline{3-4}
   ~ overbound&  &18.33& $0.25 \cdot 10^{-2}$ \\
       \cline{3-4}
    &  &18.32 &0\\
    \bottomrule
    \vspace*{-25pt}
  \end{tabular}
\end{table}
Fig.~\ref{fig:fig2} shows the optimal values of the filter $\mathbf \Sigma^\star$ for the considered problem, decomposed according to~\eqref{eq:bound_constr}. 
For each time step and each state, the effect of the lumped disturbance is decomposed into three parts: the effect of the parametric uncertainties (purple), the effect of the linearization errors (orange), and the effect of the additive disturbance $\mathcal{W}$ (blue).
Since the dynamics are linear in the rotation, the linearization errors are zeros in the direction of $\psi$ and $\dot \psi$. The size of the purple region ($\Theta^\star$) highlights the importance of learning the model mismatch as in Section~\ref{sec:4b}. Again, this dynamic over-bounding demonstrates the flexibility of the method.

Note that the linearization of system~\eqref{eq:nonlin_dyn_rot} at the origin is not controllable, which prevents the use of classic linear control methods for comparison.
\section{Conclusion}
This paper has proposed a novel approach to solve finite-horizon constrained robust optimal control problems for nonlinear systems with affine parametric uncertainties and additive disturbances.
Our method simultaneously optimizes a nominal \emph{nonlinear} trajectory, an affine error feedback policy, and \emph{convex} uncertainty bounds and guarantees robust constraint satisfaction. The convex bounds enable deriving convex robust performance guarantees.
We demonstrated the effectiveness of our method in the simulation example of a post-capture stabilization of a satellite with state and input constraints and performance guarantees. Our results illustrate the performance and low conservatism of our approach.
For future research, we intend to work towards a corresponding recursive feasibility and stability analysis for a receding horizon implementation, in line with typical \ac{RMPC} practices. The flexibility of the tubes presents a particular challenge, as standard methods cannot be readily applied. {For practical usability, we intend to develop a tailored routine to solve efficiently the \ac{NLP}, as current available methods for \ac{MPC} cannot be applied.}
\bibliographystyle{IEEEtran}
\bibliography{IEEEabrv,references}
\balance
\end{document}